\documentclass[pra,aps,nopacs,onecolumn,twoside,superscriptaddress]{revtex4}

\usepackage{multirow, makecell}
\usepackage{amsmath,amsfonts,amssymb,caption,color,epsfig,graphics,graphicx,hyperref,latexsym,mathrsfs,revsymb,theorem,url,verbatim,epstopdf,enumerate}
\usepackage{amsmath,amsfonts,amssymb,caption,hyperref,color,epsfig,graphics,graphicx,latexsym,mathrsfs,revsymb,theorem,url,verbatim,epstopdf,cleveref}
\usepackage{fontenc}
\usepackage{cases}
\usepackage[T1]{fontenc}
\hypersetup{colorlinks,linkcolor={blue},citecolor={blue},urlcolor={red}}
\usepackage{lmodern} 
\usepackage{tipa}
\usepackage{textcomp} 

\newtheorem{definition}{Definition}
\newtheorem{proposition}[definition]{Proposition}
\newtheorem{lemma}[definition]{Lemma}

\newtheorem{theorem}[definition]{Theorem}
\newtheorem{corollary}[definition]{Corollary}
\newtheorem{conjecture}[definition]{Conjecture}

\newtheorem{remark}[definition]{Remark}
\newtheorem{example}[definition]{Example}
\newtheorem{question}[definition]{Question}
\newtheorem{memo}[definition]{Memo}

\def\squareforqed{\hbox{\rlap{$\sqcap$}$\sqcup$}}
\def\qed{\ifmmode\squareforqed\else{\unskip\nobreak\hfil
		\penalty50\hskip1em\null\nobreak\hfil\squareforqed
		\parfillskip=0pt\finalhyphendemerits=0\endgraf}\fi}
\def\endenv{\ifmmode\;\else{\unskip\nobreak\hfil
		\penalty50\hskip1em\null\nobreak\hfil\;
		\parfillskip=0pt\finalhyphendemerits=0\endgraf}\fi}
\newenvironment{proof}{\noindent \textbf{{Proof.~} }}{\qed}
\def\Dbar{\leavevmode\lower.6ex\hbox to 0pt
	{\hskip-.23ex\accent"16\hss}D}
\makeatletter
\def\url@leostyle{%
	\@ifundefined{selectfont}{\def\UrlFont{\sf}}{\def\UrlFont{\small\ttfamily}}}
\makeatother
\def\bcj{\begin{conjecture}}
	\def\ecj{\end{conjecture}}
\def\bcr{\begin{corollary}}
	\def\ecr{\end{corollary}}
\def\bd{\begin{definition}}
	\def\ed{\end{definition}}
\def\bea{\begin{eqnarray}}
	\def\eea{\end{eqnarray}}
\def\bem{\begin{enumerate}}
	\def\eem{\end{enumerate}}
\def\bex{\begin{example}}
	\def\eex{\end{example}}
\def\bim{\begin{itemize}}
	\def\eim{\end{itemize}}
\def\bl{\begin{lemma}}
	\def\el{\end{lemma}}
\def\bma{\begin{bmatrix}}
	\def\ema{\end{bmatrix}}
\def\bpf{\begin{proof}}
	\def\epf{\end{proof}}
\def\bpp{\begin{proposition}}
	\def\epp{\end{proposition}}
\def\bqu{\begin{question}}
	\def\equ{\end{question}}
\def\br{\begin{remark}}
	\def\er{\end{remark}}
\def\bt{\begin{theorem}}
	\def\et{\end{theorem}}
\def\bmm{\begin{memo}}
	\def\emm{\end{memo}}

\def\btb{\begin{tabular}}
	\def\etb{\end{tabular}}

	\newcommand{\nc}{\newcommand}
	
	\def\i{\iota}

	\nc{\bbA}{\mathbb{A}} \nc{\bbB}{\mathbb{B}} \nc{\bbC}{\mathbb{C}}
	\nc{\bbD}{\mathbb{D}} \nc{\bbE}{\mathbb{E}} \nc{\bbF}{\mathbb{F}}
	\nc{\bbG}{\mathbb{G}} \nc{\bbH}{\mathbb{H}} \nc{\bbI}{\mathbb{I}}
	\nc{\bbJ}{\mathbb{J}} \nc{\bbK}{\mathbb{K}} \nc{\bbL}{\mathbb{L}}
	\nc{\bbM}{\mathbb{M}} \nc{\bbN}{\mathbb{N}} \nc{\bbO}{\mathbb{O}}
	\nc{\bbP}{\mathbb{P}} \nc{\bbQ}{\mathbb{Q}} \nc{\bbR}{\mathbb{R}}
	\nc{\bbS}{\mathbb{S}} \nc{\bbT}{\mathbb{T}} \nc{\bbU}{\mathbb{U}}
	\nc{\bbV}{\mathbb{V}} \nc{\bbW}{\mathbb{W}} \nc{\bbX}{\mathbb{X}}
	\nc{\bbZ}{\mathbb{Z}}
	
	\nc{\bA}{{\bf A}} \nc{\bB}{{\bf B}} \nc{\bC}{{\bf C}}
	\nc{\bD}{{\bf D}} \nc{\bE}{{\bf E}} \nc{\bF}{{\bf F}}
	\nc{\bG}{{\bf G}} \nc{\bH}{{\bf H}} \nc{\bI}{{\bf I}}
	\nc{\bJ}{{\bf J}} \nc{\bK}{{\bf K}} \nc{\bL}{{\bf L}}
	\nc{\bM}{{\bf M}} \nc{\bN}{{\bf N}} \nc{\bO}{{\bf O}}
	\nc{\bP}{{\bf P}} \nc{\bQ}{{\bf Q}} \nc{\bR}{{\bf R}}
	\nc{\bS}{{\bf S}} \nc{\bT}{{\bf T}} \nc{\bU}{{\bf U}}
	\nc{\bV}{{\bf V}} \nc{\bW}{{\bf W}} \nc{\bX}{{\bf X}}
	\nc{\bZ}{{\bf Z}}
	
	\nc{\as}{{\cal AS}}
	\nc{\app}{{\cal AP}}
	\nc{\ar}{{\cal AR}}
	\nc{\bp}{{\cal BP}}
	\nc{\dbp}{{\cal DBP}}
	\nc{\ew}{{\cal EW}}
	\nc{\dew}{{\cal DEW}}
	\nc{\ndew}{{\cal NDEW}}
	\nc{\conv}{{\text{Conv}}}

	\nc{\cA}{{\cal A}} \nc{\cB}{{\cal B}} \nc{\cC}{{\cal C}}
	\nc{\cD}{{\cal D}} \nc{\cE}{{\cal E}} \nc{\cF}{{\cal F}}
	\nc{\cG}{{\cal G}} \nc{\cH}{{\cal H}} \nc{\cI}{{\cal I}}
	\nc{\cJ}{{\cal J}} \nc{\cK}{{\cal K}} \nc{\cL}{{\cal L}}
	\nc{\cM}{{\cal M}} \nc{\cN}{{\cal N}} \nc{\cO}{{\cal O}}
	\nc{\cP}{{\cal P}} \nc{\cQ}{{\cal Q}} \nc{\cR}{{\cal R}}
	\nc{\cS}{{\cal S}} \nc{\cT}{{\cal T}} \nc{\cU}{{\cal U}}
	\nc{\cV}{{\cal V}} \nc{\cW}{{\cal W}} \nc{\cX}{{\cal X}}
	\nc{\cZ}{{\cal Z}}
	
	\nc{\cpp}{{\cal PP}}
	
	\nc{\hA}{{\hat{A}}} \nc{\hB}{{\hat{B}}} \nc{\hC}{{\hat{C}}}
	\nc{\hD}{{\hat{D}}} \nc{\hE}{{\hat{E}}} \nc{\hF}{{\hat{F}}}
	\nc{\hG}{{\hat{G}}} \nc{\hH}{{\hat{H}}} \nc{\hI}{{\hat{I}}}
	\nc{\hJ}{{\hat{J}}} \nc{\hK}{{\hat{K}}} \nc{\hL}{{\hat{L}}}
	\nc{\hM}{{\hat{M}}} \nc{\hN}{{\hat{N}}} \nc{\hO}{{\hat{O}}}
	\nc{\hP}{{\hat{P}}} \nc{\hR}{{\hat{R}}} \nc{\hS}{{\hat{S}}}
	\nc{\hT}{{\hat{T}}} \nc{\hU}{{\hat{U}}} \nc{\hV}{{\hat{V}}}
	\nc{\hW}{{\hat{W}}} \nc{\hX}{{\hat{X}}} \nc{\hZ}{{\hat{Z}}}
	
	\nc{\hn}{{\hat{n}}}
	
	\def\diag{\mathop{\rm diag}}

	\def\max{\mathop{\rm max}}

	\def\tr{\mathop{\rm Tr}}

	\newcommand{\ket}[1]{|#1\rangle}
	\newcommand{\proj}[1]{| #1\rangle\!\langle #1 |}

	\def\Dbar{\leavevmode\lower.6ex\hbox to 0pt
		{\hskip-.23ex\accent"16\hss}D}

\begin{document}

		\title{A counterexample to the strong spin alignment conjecture}
		
		\date{\today}

		\author{Zhiwei Song}\email[]{zhiweisong@buaa.edu.cn}
		\affiliation{LMIB(Beihang University), Ministry of education, and School of Mathematical Sciences, Beihang University, Beijing 100191, China}
		
		\author{Lin Chen}\email[]{linchen@buaa.edu.cn}
		\affiliation{LMIB(Beihang University), Ministry of education, and School of Mathematical Sciences, Beihang University, Beijing 100191, China}
		
		
	\begin{abstract}
		The spin alignment conjecture was originally formulated in connection with the additivity of coherent information for a class of quantum channels known as platypus channels. Recently, a stronger majorization-based version was proposed by M. A. Alhejji and E. Knill [Commun. Math. Phys. 405, 119, 2024], asserting that the spectrum of the alignment operator is always majorized by that of the perfectly aligned configuration. In this letter, we show that this strong spin alignment conjecture is false in general by constructing an explicit counterexample in the smallest unresolved case, namely three qubits. The example uses two-body states that are not jointly compatible with any single three-qubit global state, which naturally leads to a compatibility-constrained variant of the conjecture.
	\end{abstract}
		
		\maketitle

		\section{Introduction}
		
		The coherent transmission of quantum information through noisy channels is a central problem in quantum Shannon theory. The coherent information provides a fundamental achievable rate and, after regularization, characterizes the quantum capacity of a channel; consequently, additivity questions for coherent information are of basic importance \cite{Lloyd97,DevetakShor05}. Against this background, the spin alignment conjecture was introduced by Leditzky \emph{et al.}~\cite{platypus,nonadd} in the study of the additivity of coherent information for a class of quantum channels known as platypus channels. Let $d$ be the local dimension, let $n$ be the number of particles, let $\mu$ be a probability measure on the subsets of $[n]$, and let $Q$ be a fixed quantum state on $\bbC^d$. The original question asks whether the von Neumann entropy of
		\begin{equation}
			\sum_{I\subseteq[n]} \mu_I \, \proj{\psi_I} \otimes Q^{\otimes I^c}
		\end{equation}
		is always minimized when each freely chosen pure state $\ket{\psi_I}$ is taken to be a tensor power of a maximal eigenvector of $Q$. Physically, this corresponds to aligning all freely chosen spins with the dominant spin direction of $Q$.
		
		Alhejji and Knill reformulated the problem in terms of alignment operators and proposed a stronger version based on majorization~\cite{AK}. For Hermitian matrices $X$ and $Y$, we write $X \preceq Y$ if their eigenvalues in nonincreasing order satisfy
		\begin{equation}
			\sum_{i=1}^k \lambda_i(X) \le \sum_{i=1}^k \lambda_i(Y)
		\end{equation}
		for every $k$, with equality for the total trace. This relation implies entropy minimization for every unitarily invariant strictly concave function.
		
		\begin{conjecture}[Strong Spin Alignment Conjecture~\cite{AK}]
			\label{conj:strong_spin}
			Let $Q$ be a quantum state on $\bbC^d$, and let $\ket{q_1}$ be a maximal eigenvector of $Q$. Then for any probability measure $\mu$ on the subsets of $[n]$ and any tuple of pure states $(\ket{\psi_I})_{I\subseteq[n]}$,
			\begin{equation}
				\sum_{I\subseteq[n]} \mu_I \, \proj{\psi_I} \otimes Q^{\otimes I^c}
				\preceq
				\sum_{I\subseteq[n]} \mu_I \, \proj{q_1}^{\otimes I} \otimes Q^{\otimes I^c}.
			\end{equation}
		\end{conjecture}
		
		Several special cases were established in~\cite{AK}. Subsequently, the $n=2$ case of this conjecture was settled in~\cite{Alhejji2025}. Thus the first unresolved case is $n=3$. In the next section, we show that the strong conjecture already fails there.
		
		\section{A three-qubit counterexample}
		\label{sec:counterexample}

		For this tripartite system, let $Q = \bbI/2$ be the maximally mixed state on a single qubit, and let the probability measure $\mu$ be uniformly distributed over the 2-element subsets of $\{A,B,C\}$. The alignment operator is given by
		\begin{eqnarray}
			\label{H}
			H = \frac{1}{6} \big( \proj{\psi_{AB}} \otimes \bbI_C + \proj{\psi_{AC}} \otimes \bbI_B + \bbI_A \otimes \proj{\psi_{BC}} \big).
		\end{eqnarray}
		According to Conjecture \ref{conj:strong_spin}, the spectrum of $H$ should be majorized by the spectrum of the target operator $H_{\text{target}}$, where $\ket{q_1}$ is chosen as $\ket{0}$:
		\begin{eqnarray}
			H_{\text{target}} = \frac{1}{6} \big( \proj{0}_{A}\otimes \proj{0}_{B}\otimes \bbI_C + \proj{0}_{A}\otimes \bbI_B\otimes \proj{0}_{C} + \bbI_A \otimes \proj{0}_{B}\otimes \proj{0}_{C} \big).
		\end{eqnarray}
		The eigenvalues of $H_{\text{target}}$ sorted in descending order are exactly
		\begin{eqnarray}
			\lambda(H_{\text{target}}) = \left(\frac{1}{2}, \frac{1}{6}, \frac{1}{6}, \frac{1}{6}, 0, 0, 0, 0\right)^T.
		\end{eqnarray}
		We now construct a specific pure-state tuple $(\ket{\psi_{AB}}, \ket{\psi_{AC}}, \ket{\psi_{BC}})$ that violates the majorization condition. Let
		\begin{eqnarray}
			\ket{\psi_{AB}} &:=& \frac{1}{\sqrt{217}} \big( 6\ket{01} - 9\ket{10} - 10\ket{11} \big), \\
			\ket{\psi_{AC}} &:=& \frac{1}{\sqrt{46}} \big( -2\ket{00} - 4\ket{01} - 1\ket{10} + 5\ket{11} \big), \\
			\ket{\psi_{BC}} &:=& \frac{1}{\sqrt{53}} \big( -6\ket{00} - 1\ket{01} - 4\ket{10}\big).
		\end{eqnarray}
		Substituting into (\ref{H}), a direct calculation gives 
		\begin{eqnarray}
			\lambda_1(H)\approx 0.32276,\qquad
			\lambda_2(H)\approx 0.31510,\qquad
			\lambda_3(H)\approx 0.19654.
		\end{eqnarray}
		Consequently,
		\begin{eqnarray}
			\sum_{i=1}^3 \lambda_i(H) \approx 0.83440 > \frac{5}{6}= \sum_{i=1}^3 \lambda_i(H_{\text{target}}).
		\end{eqnarray}
		Since the cumulative sum of the first three eigenvalues of $H$ strictly exceeds the corresponding sum for $H_{\text{target}}$, the majorization relation $H \preceq H_{\text{target}}$ is false. Therefore, the strong spin alignment conjecture fails.
		
		{\bf Remark.}
		Although the majorization statement fails, the original entropy-minimization conjecture is not violated by this example. Numerically,
		\begin{eqnarray}
			S(H)\approx 2.064 \ \text{bits} > 1.792 \ \text{bits} \approx S(H_{\mathrm{target}}).
		\end{eqnarray}
		
		\section{A compatible-marginal refined conjecture}
		
		The counterexample above demonstrates that Conjecture~\ref{conj:strong_spin} can fail when the states on different subsets are chosen independently. This suggests that the failure of Conjecture~\ref{conj:strong_spin} may stem from the absence of compatibility rather than from the alignment principle itself.  Motivated by this observation and supported by numerical evidence, we formulate the following refined conjecture.
		
		\begin{conjecture}[Compatible-marginal spin alignment conjecture]
			\label{conj:compatible}
			Let $Q$ be a quantum state on $\bbC^d$, and let $\ket{q_1}$ be a maximal eigenvector of $Q$. For any $n$-partite global quantum state $\rho$ on $(\bbC^d)^{\otimes n}$, define its marginals on $I\subseteq[n]$ as
			\begin{eqnarray}
				\rho_I := \tr\nolimits_{I^c} \rho.
			\end{eqnarray}
			Then for every probability measure $\mu$ on the subsets of $[n]$, the following majorization relation holds:
			\begin{eqnarray}
				\sum_{I\subseteq[n]} \mu_I \rho_I \otimes Q^{\otimes I^c}
				\preceq
				\sum_{I\subseteq[n]} \mu_I \proj{q_1}^{\otimes I} \otimes Q^{\otimes I^c}.
			\end{eqnarray}
		\end{conjecture}

	Conjecture~\ref{conj:compatible} may be viewed as a spectral extremality statement within the quantum marginal problem, with the aligned configuration as the most spectrally concentrated compatible one. 
		As further evidence for Conjecture~\ref{conj:compatible}, we record that the compatible-marginal majorization statement is restored in the same three-qubit two-body setting where the unrestricted strong conjecture fails. Equivalently, the next proposition verifies Conjecture~\ref{conj:compatible} when $Q=\bbI/2$ and $\mu$ is supported on $AB,AC,BC$.
		
		\begin{proposition}
			\label{prop:two_body_compatible}
			Let $a,b,c\ge 0$ with $a+b+c=1$, and let $\rho_{ABC}$ be a three-qubit state. Then
			\begin{eqnarray}
				a \rho_{AB}\otimes \bbI_C+b \rho_{AC}\otimes \bbI_B+c \bbI_A\otimes \rho_{BC}
				\preceq
				a \proj{00}_{AB}\otimes \bbI_C+b \proj{00}_{AC}\otimes \bbI_B+c \bbI_A\otimes \proj{00}_{BC}.
			\end{eqnarray}
		\end{proposition}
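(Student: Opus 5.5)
The plan is to use Ky Fan's variational principle together with the explicit spectrum of the right-hand side. Write $M$ for the left-hand operator and $T$ for the right-hand one. $T$ is diagonal in the computational basis, and its eigenvalue on $\ket{xyz}$ is $a[x{=}y{=}0]+b[x{=}z{=}0]+c[y{=}z{=}0]$. Hence $\lambda(T)=(1,\,a,\,b,\,c,\,0,0,0,0)$ up to ordering; since $a,b,c\le 1$, the entry $1$ is the largest. Both operators have trace $2$, and $M\ge 0$.

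\emph{Reduction to three inequalities.} For $k\ge 4$, the top-$k$ eigenvalue sum of $T$ is already $2=\tr M$, which dominates any partial eigenvalue sum of $M$. So only $k=1,2,3$ remain, i.e.
\begin{eqnarray}
\sum_{i=1}^k\lambda_i(M)\le 1+s_{k-1}, \qquad k=1,2,3,
\end{eqnarray}
where $s_j$ denotes the sum of the $j$ largest numbers among $a,b,c$.

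\emph{The dual form.} Ky Fan gives $\sum_{i\le k}\lambda_i(M)=\max_P \tr(PM)$ over rank-$k$ projectors $P$. Moreover $\tr(P(\rho_{AB}\otimes\bbI_C))=\tr(\rho_{ABC}\,(\tr_C P\otimes \bbI_C))$, and similarly for the other two terms. Thus $\tr(PM)=\tr(\rho_{ABC}X_P)$ with
\begin{eqnarray}
X_P=a\,\tr_C P\otimes\bbI_C+b\,\tr_B P\otimes \bbI_B+c\,\bbI_A\otimes \tr_A P,
\end{eqnarray}
which is linear in $\rho_{ABC}$. So it suffices to take $\rho_{ABC}=\proj{v}$ pure. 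Then Schmidt duality gives $\mathrm{spec}(\rho_{AB})=\mathrm{spec}(\rho_C)=(p_C,1-p_C)$, and similarly for the other two marginals. Each $\rho_{AB}$ has rank at most two.

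\emph{Case $k=1$.} This is immediate: $\|\rho_{AB}\otimes \bbI\|_\infty\le 1$, so $\lambda_1(M)\le a+b+c=1$.

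\emph{Cases $k=2,3$.} The naive bound via Ky Fan subadditivity, $\sum_{i\le k}\lambda_i(M)\le a\sum_{i\le k}\lambda_i(\rho_{AB}\otimes\bbI)+\cdots$, is too weak. It yields $1+ap_C+bp_B+cp_A$ for $k=3$, which exceeds $1+s_2$ when all marginals are nearly pure and $a=b=c$. The correlations forced by a single global state must therefore be exploited.

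The plan here is to fix $P$ and a pure $\ket v$ and expand $\tr(P M)$ as follows. Write $P=\sum_{j\le k}\proj{w_j}$ and decompose each $\ket{w_j}$ along a chosen qubit. I would then bound each of the three terms by $\max$ overlap quantities such as $\langle v|(\Pi\otimes\bbI)|v\rangle$, where $\Pi=\tr_C P$ has trace $k$ and operator norm $\le 2$.

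The key point to establish is a "budget" inequality: the total weight that the three operators $\tr_C P,\tr_B P,\tr_A P$ can place beyond the first unit of norm is at most $k-1$, distributed one unit per pair, and it cannot simultaneously be placed on all three pairs with full weight. Equivalently, I will show
\begin{eqnarray}
\tr(PM)\le 1+\sum_{\text{$k-1$ pairs}}(\text{coefficient}),
\end{eqnarray}
by checking that at most $k-1$ of the pair-terms can exceed their "rank-one" contribution. The test case is the aligned configuration, where $P$ spans $\ket{000}$ and $\ket{001}$ and exactly one pair, namely $AB$, gains weight $2$.

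\emph{Main obstacle.} This last combinatorial/convexity step is the one I expect to be hardest. As a fallback, I would restrict by convexity to extreme configurations and use the $U\otimes U\otimes U$ symmetry to put $\ket v$ in a canonical (GHZ/W-type Ac\'in) form. This reduces the claim to a small explicit eigenvalue optimisation in a few parameters, which can then be handled analytically.
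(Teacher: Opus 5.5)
\textbf{There is a genuine gap: the proposal gives no argument for $k=2$ and $k=3$, which are the whole content of the proposition.} Your preliminary steps are sound: the target spectrum $(1,a,b,c,0,0,0,0)$, the triviality of $k=1$ and $k\ge 4$, and the reduction to pure $\rho_{ABC}$ by convexity of Ky Fan sums.

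\emph{The budget inequality.} In its aggregate form (``total excess over one unit at most $k-1$, at most one unit per pair''), it is just the proposition restated for that $k$. In the counting form you actually propose to verify (``at most $k-1$ of the pair-terms can exceed their rank-one contribution''), it is false. Take $\rho_{ABC}=\proj{000}$, $k=2$, and $P$ the projector onto $\mathrm{span}\{\ket{000},(\ket{001}+\ket{010}+\ket{100})/\sqrt3\}$. Then each of $\tr(P(\rho_{AB}\otimes\bbI_C))$, $\tr(P(\rho_{AC}\otimes\bbI_B))$ and $\tr(P(\bbI_A\otimes\rho_{BC}))$ equals $4/3$. For $a=b=c$ this $P$ is even a maximizer of $\tr(PM)$. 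So only a weighted aggregate bound can hold, and you have no tool that proves one.

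\emph{The fallback.} It does not close the gap either. The invariance you need is under $U_A\otimes U_B\otimes U_C$, which does hold here; $U\otimes U\otimes U$ alone cannot reach the Ac\'in form. Even with the full local-unitary invariance, five real state parameters plus the weights remain. Ky Fan sums of the resulting $8\times 8$ matrices have no closed form, so ``handled analytically'' is not yet a strategy.

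\emph{What the paper supplies.} Write $K_r$ for the sum of the $r$ largest eigenvalues. The paper's proof rests on two ingredients.

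First, it removes the weights. For $a\ge b\ge c$, write the left side as $(a-b)X_1+(b-c)X_2+cX_3$, with $X_1=\rho_{AB}\otimes\bbI_C$, $X_2=X_1+\rho_{AC}\otimes\bbI_B$ and $X_3=X_2+\bbI_A\otimes\rho_{BC}$. Write the target as $(a-b)T_1+(b-c)T_2+cT_3$. The $T_i$ are diagonal in one basis with a common eigenvalue ordering, so Ky Fan sums are additive on the target side. Subadditivity then reduces everything to the weight-free relations $X_i\preceq T_i$; these are exactly the vertex cases of your budget inequality.
\begin{itemize}
\item[(i)] $X_1\preceq T_1$ is trivial.
\item[(ii)] $X_2\preceq T_2$ (essentially $K_2(X_2)\le 3$) needs no compatibility at all. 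It follows from the Alhejji--Knill two-summand result for pure states plus convexity.
\end{itemize}

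Second, it handles the compatibility-dependent relation $X_3\preceq T_3$, i.e.\ $K_r(X_3)\le r+2$ for $r=1,2,3$. The paper's own counterexample shows this fails for independently chosen pure two-body states.
\begin{itemize}
\item[(iii)] The tool is the spin-flip identity $\sigma_yM^T\sigma_y=(\tr M)\bbI-M$, applied on all three qubits. For pure $\rho$ it gives $X_3=D+\rho+\widetilde\rho$, where $D=\rho_A\otimes\bbI\otimes\bbI+\bbI\otimes\rho_B\otimes\bbI+\bbI\otimes\bbI\otimes\rho_C-\bbI$ depends only on one-qubit marginals, and $\rho+\widetilde\rho\ge 0$ has trace $2$.
\item[(iv)] Let $r_A,r_B,r_C\le 1/2$ be the smaller eigenvalues of the one-qubit marginals and $s=r_A+r_B+r_C$. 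Weyl monotonicity on the lower eigenvalues, $X_3\ge 0$, $\tr X_3=6$ and the polygon inequalities then give $K_r(X_3)\le 2+K_r(D)-(1-s)_+\le r+2$.
\end{itemize}
Without this identity, or some equivalent quantitative input that uses the global state, the cases $k=2,3$ remain unproved.
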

		
		\begin{proof}
			By permuting the three subsystems if necessary, we may assume $a\ge b\ge c\ge 0$. Set
			\begin{eqnarray}
				X_1:=\rho_{AB}\otimes \bbI_C,\qquad X_2:=\rho_{AB}\otimes \bbI_C+\rho_{AC}\otimes \bbI_B,\qquad X_3:=\rho_{AB}\otimes \bbI_C+\rho_{AC}\otimes \bbI_B+\bbI_A\otimes \rho_{BC},
			\end{eqnarray}
			and
			\begin{eqnarray}
				\notag
				&&T_1:=\proj{00}_{AB}\otimes \bbI_C,\qquad T_2:=\proj{00}_{AB}\otimes \bbI_C+\proj{00}_{AC}\otimes \bbI_B, \\
				&&T_3:=\proj{00}_{AB}\otimes \bbI_C+\proj{00}_{AC}\otimes \bbI_B+\bbI_A\otimes \proj{00}_{BC}.
			\end{eqnarray}
			Then
			\begin{eqnarray}
				a \rho_{AB}\otimes \bbI_C+b \rho_{AC}\otimes \bbI_B+c \bbI_A\otimes \rho_{BC}=(a-b)X_1+(b-c)X_2+cX_3,
			\end{eqnarray}
			while the target is $(a-b)T_1+(b-c)T_2+cT_3$. 
			
		Firstly, since $\rho_{AB}\preceq \proj{00}_{AB}$ and majorization is preserved under tensoring with $\bbI_C$, we have
			\begin{eqnarray}
				\label{x1}
				X_1\preceq T_1.
			\end{eqnarray}
		
			Secondly, we write
		$\rho_{AB}=\sum_i p_i \proj{\phi_i}_{AB}, \rho_{AC}=\sum_j q_j \proj{\psi_j}_{AC}$. Then
			\begin{eqnarray}
				X_2=\sum_{i,j}p_iq_j\left(\proj{\phi_i}_{AB}\otimes \bbI_C+\proj{\psi_j}_{AC}\otimes \bbI_B\right).
			\end{eqnarray}
			For each pair $(i,j)$, the two-summand case proved in~\cite[Corollary 4.10]{AK} yields
			\begin{eqnarray}
				\proj{\phi_i}_{AB}\otimes \bbI_C+\proj{\psi_j}_{AC}\otimes \bbI_B\preceq T_2.
			\end{eqnarray}
			By convexity of the Ky Fan sums, we have
			\begin{eqnarray}
				\label{x2}
			X_2\preceq T_2.
			\end{eqnarray}
			
			We next show that $X_3\preceq T_3$. Let $K_r(M):=\sum_{i=1}^r\lambda_i(M)$ denote the $r$th Ky Fan sum. By convexity, it suffices to consider pure state $\rho=|\psi\rangle\langle\psi|$. By local unitary invariance, we may assume
			\begin{eqnarray}
				\rho_A=\diag(1-r_A,r_A),\qquad \rho_B=\diag(1-r_B,r_B),\qquad \rho_C=\diag(1-r_C,r_C),
			\end{eqnarray}
			where 
			\begin{eqnarray}
				\label{g1}
			0\le r_A,r_B,r_C\le 1/2.
			\end{eqnarray}
For pure three-qubit states these numbers satisfy the polygon inequalities~\cite{Higuchi03}
			\begin{eqnarray}
				\label{g2}
				r_A\le r_B+r_C, \qquad r_B\le r_C+r_A, \qquad r_C\le r_A+r_B.
			\end{eqnarray}
			Define
			\begin{eqnarray}
				D:=\rho_A\otimes \bbI_B\otimes \bbI_C+\bbI_A\otimes \rho_B\otimes \bbI_C+\bbI_A\otimes \bbI_B\otimes \rho_C-\bbI_{ABC},
			\end{eqnarray}
			and set $s:=r_A+r_B+r_C$ and $\delta:=(1-s)_+$. 
		In the computational basis, $D$ is diagonal with eigenvalues
		\begin{eqnarray}
			\notag
			&&2-s,\ 1-r_A-r_B+r_C,\ 1-r_A+r_B-r_C,\ 1+r_A-r_B-r_C,\ \\
   		&&-r_A+r_B+r_C,\ r_A-r_B+r_C,\ r_A+r_B-r_C,\ s-1.
		\end{eqnarray}
		Denote $h_1\ge\cdots\ge h_8$ as the eigenvalues of $D$.
		By (\ref{g1}) and (\ref{g2}), one can verify that
		$
		h_8=s-1,
		$
		and therefore $\max\{h_8,0\}=h_8+\delta$.  A direct check gives
			\begin{eqnarray}
				K_1(D)\le 1+\delta,\qquad K_2(D)\le 2+\delta,\qquad K_3(D)\le 3+\delta.
			\end{eqnarray}
			Let
			\begin{eqnarray}	
				\widetilde{\rho}:=(\sigma_y\otimes \sigma_y\otimes \sigma_y)\rho^T(\sigma_y\otimes \sigma_y\otimes \sigma_y),
	 	     \end{eqnarray}		
	 	     where $\sigma_y=\bma 0&-i\\i&0\ema$.
			Note that the identity 
		$
			\sigma_y M^T\sigma_y=(\tr M)\bbI-M
			$
			holds for arbitrary $2\times 2$ matrices.  Applying this local map to all three qubits, we can express the transformation as 
				\begin{eqnarray}
					\widetilde{\rho} = (\tr(\cdot)\bbI - \text{id})^{\otimes 3}(\rho),
				\end{eqnarray}
			where $\text{id}$ denotes the identity map.
			Expanding this tensor product and evaluating the partial traces linearly, we obtain
				\begin{eqnarray}
				\widetilde{\rho}=\bbI_{ABC}-\rho_A\otimes \bbI_B\otimes \bbI_C-\bbI_A\otimes \rho_B\otimes \bbI_C-\bbI_A\otimes \bbI_B\otimes \rho_C+\rho_{AB}\otimes \bbI_C+\rho_{AC}\otimes \bbI_B+\bbI_A\otimes \rho_{BC}-\rho.
			\end{eqnarray}
			
			Thus $X_3=D+P$, where $P:=\rho+\widetilde{\rho}\ge 0$ and $\tr P=2$. Consequently, Weyl monotonicity gives $\lambda_i(X_3)\ge h_i$ for all $i$, and also $\lambda_8(X_3)\ge \max\{h_8,0\}=h_8+\delta$. Therefore, for $r=1,2,3$,
			\begin{eqnarray}
				\sum_{i=r+1}^8\lambda_i(X_3)\ge \sum_{i=r+1}^7 h_i+h_8+\delta=\tr D-K_r(D)+\delta=4-K_r(D)+\delta,
			\end{eqnarray}
			and since $X_3\ge 0$ and $\tr X_3=6$, we obtain
			\begin{eqnarray}
				K_r(X_3)=6-\sum_{i=r+1}^8\lambda_i(X_3)\le 2+K_r(D)-\delta\le r+2,\qquad r=1,2,3.
			\end{eqnarray}
			Because $\lambda(T_3)=(3,1,1,1,0,0,0,0)$, for $r\ge 4$, we trivially have
		$
		K_r(X_3)\le \tr X_3=6=K_r(T_3).
		$
			Together with the bounds for $r=1,2,3$, 
			this proves 
			\begin{eqnarray}
				\label{x3}
			X_3\preceq T_3.
			\end{eqnarray}

		Finally, for every $r=1,\ldots,8$, Ky Fan subadditivity together with (\ref{x1}), (\ref{x2}), and (\ref{x3}) yields
		\begin{eqnarray}
			&\lefteqn{K_r\Big(a\rho_{AB}\otimes \bbI_C+b\rho_{AC}\otimes \bbI_B+c\bbI_A\otimes \rho_{BC}\Big)}\nonumber\\
			&\le&(a-b)K_r(X_1)+(b-c)K_r(X_2)+cK_r(X_3)\nonumber\\
			&\le&(a-b)K_r(T_1)+(b-c)K_r(T_2)+cK_r(T_3)\nonumber\\
			&=&K_r\big((a-b)T_1+(b-c)T_2+cT_3\big)\nonumber\\
			&=&K_r\big(a\proj{00}_{AB}\otimes \bbI_C+b\proj{00}_{AC}\otimes \bbI_B+c\bbI_A\otimes \proj{00}_{BC}\big),
		\end{eqnarray}
	where the first equality follows from a direct calculation.
	This completes the proof.
		\end{proof}

		\section{Conclusion and outlook}
		
		We have shown that the strong spin alignment conjecture fails in general by giving an explicit counterexample. 
		This motivates the compatible-marginal variant formulated in Conjecture~\ref{conj:compatible}. 
		A natural next step is to prove Conjecture~\ref{conj:compatible} in more general three-qubit settings and for arbitrary positive semidefinite $Q$. More broadly, it would be interesting to understand its connection with the quantum marginal problem and the entropy extremality questions that originally motivated spin alignment.

		\section*{ACKNOWLEDGEMENT}
	Authors were supported by the NNSF of China (Grant No. 12471427).


\begin{thebibliography}{99}
			
			\bibitem{Lloyd97}
			S.~Lloyd,
			\newblock Capacity of the noisy quantum channel,
			\newblock \emph{Phys. Rev. A} \textbf{55}, 1613--1622, 1997.
			\newblock \href{https://doi.org/10.1103/PhysRevA.55.1613}{https://doi.org/10.1103/PhysRevA.55.1613}.
			
			\bibitem{DevetakShor05}
			I.~Devetak and P.~W.~Shor,
			\newblock The capacity of a quantum channel for simultaneous transmission of classical and quantum information,
			\newblock \emph{Commun. Math. Phys.} \textbf{256}, 287--303, 2005.
			\newblock \href{https://doi.org/10.1007/s00220-005-1317-6}{https://doi.org/10.1007/s00220-005-1317-6}.
			
			\bibitem{platypus}
			F.~Leditzky, D.~Leung, V.~Siddhu, G.~Smith, and J.~A.~Smolin,
			\newblock The platypus of the quantum channel zoo,
			\newblock \emph{IEEE Trans. Inf. Theory} \textbf{69}, 3825--3849, 2023.
			\newblock \href{https://doi.org/10.1109/TIT.2023.3245985}{https://doi.org/10.1109/TIT.2023.3245985}.
			
			\bibitem{nonadd}
			F.~Leditzky, D.~Leung, V.~Siddhu, G.~Smith, and J.~A.~Smolin,
			\newblock Generic nonadditivity of quantum capacity in simple channels,
			\newblock \emph{Phys. Rev. Lett.} \textbf{130}, 200801, 2023.
			\newblock \href{https://doi.org/10.1103/PhysRevLett.130.200801}{https://doi.org/10.1103/PhysRevLett.130.200801}.
			
			
			\bibitem{AK}
			M.~A.~Alhejji and E.~Knill,
			\newblock Towards a resolution of the spin alignment problem,
			\newblock \emph{Commun. Math. Phys.} \textbf{405}, 119, 2024.
			\newblock \href{https://doi.org/10.1007/s00220-024-04980-1}{https://doi.org/10.1007/s00220-024-04980-1}.
			
			\bibitem{Alhejji2025}
			M.~A.~Alhejji,
			\newblock Refining Ky Fan's majorization relation with linear programming,
			\newblock \emph{Ann. Henri Poincar\'e} \textbf{27}, 909--932 (2026).
			\newblock \href{https://doi.org/10.1007/s00023-025-01592-w}{https://doi.org/10.1007/s00023-025-01592-w}.
			
			
			\bibitem{Higuchi03}
			A.~Higuchi, A.~Sudbery, and J.~Szulc,
			\newblock One-qubit reduced states of a pure many-qubit state: polygon inequalities,
			\newblock \emph{Phys. Rev. Lett.} \textbf{90}, 107902, 2003.
			\newblock \href{https://doi.org/10.1103/PhysRevLett.90.107902}{https://doi.org/10.1103/PhysRevLett.90.107902}.
		\end{thebibliography}
	\end{document}